\documentclass[journal]{IEEEtran}

\usepackage{subfigure}
\usepackage{graphicx}
\usepackage{dsfont}
\usepackage{amsmath,amssymb,amsfonts}
\usepackage{lipsum}
\usepackage{cuted}
\usepackage{pgfplots}
\pgfplotsset{compat=1.18}
\usepackage{lipsum}
\usepackage[percent]{overpic}
\usepackage{xcolor}
\usepackage{comment}
\usepackage{booktabs}
\usepackage{algorithmic}
\usepackage{amsthm}
\def\BibTeX{{\rm B\kern-.05em{\sc i\kern-.025em b}\kern-.08em
    T\kern-.1667em\lower.7ex\hbox{E}\kern-.125emX}}
\usepackage{flushend}

\usepackage[ruled]{algorithm2e}

\newtheorem{thm}{Theorem}[section]

\newtheorem{rem}{Remark}

\def \T {^{\rm T}}
\def \b {{\rm b}}

\title{Optimal Derivative Feedback Control for an Active Magnetic Levitation System: An Experimental Study on Data-Driven Approaches
}

\author{
Saber Omidi $^{1}$,
Rene Akupan Ebunle $^{2}$,  
and Se Young Yoon $^{3}$ 
\thanks{$^{1}$ Saber Omidi, Department of Mechanical Engineering,
               University of New Hampshire, Durham, NH 03824, USA.
        {\tt\small saber.omidi@unh.edu}}%
\thanks{$^{2}$ Rene Akupan Ebunle,
               Department of Electrical and Computer Engineering, 
               University of New Hampshire, Durham, NH 03824, USA.
        {\tt\small ear1080@usnh.edu}}%
\thanks{$^{3}$ Se Young Yoon, Department of Electrical and Computer Engineering, 
               University of New Hampshire, Durham, NH 03824, USA.
        {\tt\small seyoung.yoon@unh.edu}}        
}

\begin{document}

\maketitle

\begin{center}
\small\textit{Preprint. This manuscript is under review and has not been peer reviewed.}
\end{center}
\vspace{-8pt}

\begin{abstract}

This paper presents the design and implementation of data-driven optimal derivative feedback controllers for an active magnetic levitation system. A direct, model-free control design method based on the reinforcement learning framework is compared with an indirect optimal control design derived from a numerically identified mathematical model of the system. For the direct model-free approach, a policy iteration procedure is proposed, which adds an iteration layer called the epoch loop to gather multiple sets of process data, providing a more diverse dataset and helping reduce learning biases. This direct control design method is evaluated against a comparable optimal control solution designed from a plant model obtained through the combined Dynamic Mode Decomposition with Control (DMDc) and Prediction Error Minimization (PEM) system identification. Results show that while both controllers can stabilize and improve the performance of the magnetic levitation system when compared to controllers designed from a nominal model, the direct model-free approach consistently outperforms the indirect solution when multiple epochs are allowed. The iterative refinement of the optimal control law over the epoch loop provides the direct approach a clear advantage over the indirect method, which relies on a single set of system data to determine the identified model and control. 
\end{abstract}

\begin{IEEEkeywords}
Derivative feedback control, reinforcement learning, policy iteration, magnetic levitation system, optimal control, system identification, dynamic mode decomposition
\end{IEEEkeywords}


 \section{Introduction}\label{sec:intro}

\IEEEPARstart{M}{agnetic} \textcolor{black}{levitation systems are electromechanical systems that employ magnetic forces to minimize contact and eliminate frictional losses during the relative motion of multiple surfaces \cite{wang2017high}. {\it Active Magnetic Levitation} (AML) systems have found use in high bandwidth applications that require rapid response, including magnetic bearings, vibration-isolation devices, and high-speed maglev transportation \cite{samanta2008magnetic, tsuda2009vibration, yoon2012control}. AML systems have complex nonlinear dynamics, where electromechanical and electromagnetic components interact resulting in high sensitivity to parameter variations, and inherent open-loop instability \cite{berkelman2011magnetic, chamraz2021stabilization}. For example, small physical discrepancies in actuator geometry or misalignment between the actuation and sensing axes can introduce substantial uncertainty in the calculation of magnetic forces, and complicate the determination of the magnetic equilibrium for stable levitation \cite{arthur2020robust}. These uncertainties limit the effectiveness of the common linearization and model-based linear controller design solutions \cite{khimani2017implementation, yaseen2017comparative, sarmad2016sampled},  as the linearization and the resulting nominal model depend on access to accurate information of the equilibrium states. Uncertainty in the equilibrium point leads to discrepancies between the sensing center and the actuation center, resulting in steady-state errors, increased actuator effort, and possibly instability due to the model uncertainty.}

\textcolor{black}{The {\it Derivative Feedback Control} (DFC) framework offers a robust solution for addressing the challenges associated to uncertainty in the equilibrium states or unknown bias in the feedback measurement \cite{cardim2008control}. DFC uses the derivative of the state or output as the feedback measurement for control; a stabilizing DFC naturally drives the system to its actual equilibrium, even if there is a bias in the feedback measurement, because the system's state derivative is zero at the equilibrium state \cite{zaheer2021derivative}. This property makes DFC particularly suitable for systems such as AML, where frequent misalignment between the sensing and actuation axes results in such measurement bias. One commonly referred limitation of the DFC is its sensitivity to high-frequency disturbances due to the derivative calculation. In many cases, the derivatives of the states can be measured directly by sensors such as accelerometers, gyroscopes or other differential sensors. When such direct measurements are not available, low-pass filtering and numerical techniques for differentiation of noisy state measurements \cite{verma:derivative} can mitigate the effect of high-frequency noise above the bandwidth of the system \cite{arthur2020robust}. Another limitation is that DFC by itself does not address the challenges of dynamic model uncertainty and robust control, opening the door for data-driven variations that can provide a robust control implementation for systems with uncertain or unknown dynamics.}

\textcolor{black}{{\it Reinforcement Learning} (RL) strategies provide promising data-driven solutions to synthesize optimal control laws when access to an accurate mathematical model is restricted, by directly interacting with the plant to be controlled \cite{10536605,10388382}. Two prominent data-driven algorithms within this class of solutions are the {\it Policy Iteration} (PI) and the {\it Value Iteration} (VI) \cite{lewis_PI, kleinman1968, puterman1978modified, farahmand2010error, bertsekas2019reinforcement}, which iterate over the derived policy/control and its corresponding value function to determine the optimal solution. The VI algorithm is preferred when an initial stable control policy is not available to initiate iterations, whereas the PI algorithm requires the plant to be stabilized with an initial control policy. Such advantage offers limited benefits in many applications where unstable operation during the training phase is impractical or unsafe. Vrabie et al. \cite{vrabie2009adaptive} proposed a state-feedback solution to the LQR problem with partial knowledge of the system dynamics within the PI framework, and Jiang and Jian \cite{jiang} extended this to a complete model-free solution. These frameworks were later extended to the output feedback control in \cite{modares2016optimal,rizvi2019reinforcement}. Other significant extensions include a Q-learning approach for discrete-time $H_{\infty}$ control \cite{rizvi2018output}, DFC \cite{zaheer2023derivative}, and output-difference feedback control (ODFC) under measurement and process noise \cite{zaheer2024model}.}

\textcolor{black}{
Data-driven system identification can also offer a feasible solution for designing control laws including DFC in applications with uncertain or unknown system models. System identification can also provide mathematical models suitable for stability and robustness analyses prior to  the implementation of the closed-loop control. Non-parametric subspace methods, such as {\it Dynamic Mode Decomposition} (DMD), are recognized for their computational efficiency \cite{dey2022dynamic, Baddoo2023Physics}. In contrast, parametric techniques such as {\it Prediction Error Minimization} (PEM) generally achieve higher model accuracy but are more computationally intensive and sensitive to factors such as model structure and initial conditions \cite{Cheon2009Anew, Jing2020Identification, Wei2024Parameter}. For systems subject to external inputs and disturbances, specialized techniques exist for identifying input-output dynamics, including {\it DMD with Control} (DMDc) \cite{Nedzhibov2023}.}

\textcolor{black}{
Despite recent advances in DFC and data-driven control, much of the existing literature investigates idealized problems and numerical simulations, and overlooks many challenges related to practical implementation in physical systems, such as AML systems with unstable dynamics, measurement noise, model uncertainties, and uncertainty in the equilibrium state. In this work we address this knowledge gap by investigating the effectiveness of data-driven DFC control for an AML system, where DFC is specifically selected to address the uncertainty in the magnetic equilibrium and the resulting bias in the measurement. In particular, we offer a direct model-free solution where the optimal DFC is synthesized using a modified PI algorithm, and an indirect alternative where a combined DMDc-PEM technique first provides an identified mathematical model to design the DFC control. The primary contributions of this paper are as follows:
\begin{enumerate}
    \item A practical multi-epoch training procedure is introduced for a model-free PI algorithm. The proposed multi-epoch approach incorporates an outer-loop that iteratively re-collects data and performs PI using the policies from previous iterations. This process is designed to reduce training bias and increase the diversity of training samples, thereby supporting robust solution convergence in experimental settings.
    \item A rigorous experimental study is conducted between the solution of the proposed multi-epoch model-free PI approach and an indirect data-driven optimal DFC approach. This indirect approach combines data-driven system identification and model-based DFC design. A detailed comparative analysis is conducted using these data-driven control strategies to quantify improvements in performance and robustness. 
\end{enumerate}}

The remainder of this paper is organized as follows: Sect.~\ref{sec:ProblemStatement} introduces the optimal DFC problem, the model-based PI solution, and a model-free PI algorithm for synthesizing the data-driven DFC. Section~\ref{sec:SID} presents a system identification method combining DMDc and PEM that is used in this study to derive a comparable indirect data-driven DFC for the experimental study. The data-driven DFC solutions investigated in this study are implemented and tested on a AML system in Sect.~\ref{sec:ResultAndDiscussion}. The mathematical model of the AML system is presented in this section, for which the data-driven control solutions are studied through numerical simulations and experimental tests. Finally, concluding notes are shared in Sect.~\ref{sec:Conclusion}.

\section{Data-Driven Derivative Feedback Control Through Policy Iteration} \label{sec:ProblemStatement}
\textcolor{black}{
The notation employed throughout this paper is defined as follows. $\mathbb{R}^{n}$ denotes the $n$-dimensional Euclidean space, and $\mathbb{R}^{n \times m}$ represents the set of all real matrices of dimension $n \times m$. For a matrix $\boldsymbol{M}$, $\boldsymbol{M}^T$ denotes its transpose and $\boldsymbol{M}^{-1}$ denotes its inverse if it exists. If $\boldsymbol{M}$ is symmetric, the notation $\boldsymbol{M} \ge 0$ ($\boldsymbol{M} > 0$) indicates that it is positive semi-definite (positive definite). The Kronecker product of matrices $\boldsymbol{A}$ and $\boldsymbol{B}$ is denoted by $\boldsymbol{A} \otimes \boldsymbol{B}$. The vectorization operator $\text{vec}(\boldsymbol{M})$ stacks the columns of matrix $\boldsymbol{M}$ into a single column vector. $\boldsymbol{I}_n$ denotes the $n \times n$ identity matrix. For a vector $\boldsymbol{x}(t)$, $\dot{\boldsymbol{x}}(t)$ denotes its time derivative. The notation $\langle \cdot, \cdot \rangle$ refers to the standard inner product operator in Euclidean space.}

\subsection{Derivative Feedback LQR Control for LTI System}
Consider the following system,
\begin{equation} \label{LTI system}
        \boldsymbol{\dot{x}}(t) = \boldsymbol{A}\boldsymbol{x}(t) + \boldsymbol{B}\boldsymbol{u}(t), 
        \quad \boldsymbol{x}(t_0) = \boldsymbol{x}_0,
    \end{equation}
where \( \boldsymbol{x}(t) \in \mathbb{R}^n \) is the state with initial value \(\boldsymbol{x}(t_0)=\boldsymbol{x}_0\),
and \( \boldsymbol{u}(t) \in \mathbb{R}^m \) 
is the control input. Matrices \( \boldsymbol{A} \in \mathbb{R}^{n \times n} \)
and \( \boldsymbol{B} \in \mathbb{R}^{n \times m} \) 
are the state and input matrices,
respectively. It is assumed that the system is controllable, and \( \boldsymbol{A} \) is full rank.
A state DFC law takes the form    
\begin{equation} \label{DFC gain}
\boldsymbol{u}(t)=-\boldsymbol{K} \boldsymbol{\dot{x}}(t),
\end{equation}
where $\boldsymbol{K} \in \mathbb{R}^{m \times n}$ is the controller gain, and the closed-loop system becomes
\begin{equation} \label{DFC closed loop}
(\boldsymbol{I} + \boldsymbol{BK})\boldsymbol{\dot{x}} = \boldsymbol{A}\boldsymbol{x}(t).
\end{equation}

For the quadratic cost function
\begin{equation}\label{DFC cost function}
        V(\boldsymbol{\dot{x}},\boldsymbol{u}) = \int_{t_0}^{\infty} \left(\boldsymbol{\dot{x}}^T(\tau) 
        \boldsymbol{Q} \boldsymbol{\dot{x}}(\tau) +
        \boldsymbol{u}^T(\tau) \boldsymbol{R} \boldsymbol{u}(\tau) \right) d\tau,
        \end{equation}
with \(\boldsymbol{Q} \geq 0  \in \mathbb{R}^{n \times n}\) and \(\boldsymbol{R} > 0  \in \mathbb{R}^{m \times m}\), the optimal state DFC gain $\boldsymbol{K}^{*}$ for \eqref{DFC gain} can be derived from the {\it Algebraic Riccati Equation} (ARE),
\begin{equation}\label{DFC ARE}
\boldsymbol{P}^*\!\boldsymbol{A}^{\!-1}\!+
\!{\boldsymbol{A}^{\!-1}}\T\!\boldsymbol{P}^*\!-
\!\boldsymbol{P}^*\!\boldsymbol{A}^{-1}\!\boldsymbol{B}
\!\boldsymbol{R}^{-1}\!\boldsymbol{B}\T\!{\boldsymbol{A}^{-1}}
\T\!\boldsymbol{P}^*\!+\!\boldsymbol{Q}\!=\!\boldsymbol{0},
\end{equation}
where positive definite matrix $\boldsymbol{P}^*\in \mathbb{R}^{n \times n}$ is the solution
of the ARE and 
        \begin{equation}\label{DFC optimal gain}
        \boldsymbol{K}^{*}=-\boldsymbol{R}^{-1}\boldsymbol{B}
        \T {\boldsymbol{A}^{-1}}\T \boldsymbol{P}^{*},
        \end{equation} 
\cite{abdelaziz05}. Under the control gain \eqref{DFC optimal gain},
the optimal value of \eqref{DFC cost function} takes the form
\begin{equation}\label{finite value DFC cost functiont}
{V}( \boldsymbol{x}(t_0)) =   \boldsymbol{x}^T_0 \boldsymbol{P}^*\boldsymbol{x}_0.
\end{equation}  
\textcolor{black}{
The solution to \eqref{DFC ARE} exists if the pair ($\boldsymbol{A}$, $\boldsymbol{B}$) is stabilizable, ($\boldsymbol{Q}^{1/2}$, $\boldsymbol{A}$) is observable, and $\boldsymbol{A}$ is nonsingular \cite{abdelaziz05}.}

\subsection{Model-Based Policy Iteration (PI)}\label{model based}
A PI algorithm is introduced next for solving \eqref{DFC ARE}-\eqref{DFC optimal gain}, which is to become the foundation for the proposed model-free PI framework for data-driven optimal control.

\begin{thm} \label{thm_1}
Let $ \boldsymbol{K}_1$ to be any stabilizing state DFC gain for the system \eqref{LTI system} and let $\boldsymbol{P}_i>\boldsymbol{0}$ represent the unique positive definite solution to the equation
\begin{equation}\label{p_iteration}
\boldsymbol{P}_i \boldsymbol{A}_{i}^{-1}+{\boldsymbol{A}_{i}^{-1}}\T \boldsymbol{P}_i+\boldsymbol{Q}
+\boldsymbol{K}_i\T \boldsymbol{R} \boldsymbol{K}_i=\boldsymbol{0},
\end{equation}
where $\boldsymbol{A}_{i}^{-1}=\boldsymbol{A}^{-1}(\boldsymbol{I}+\boldsymbol{BK}_{i})$. For
$\boldsymbol{K}_{i+1}$ defined as 
\begin{equation}\label{k_iteration}
\boldsymbol{K}_{i+1}=-\boldsymbol{R}^{-1}\boldsymbol{B}^T {\boldsymbol{A}_{i}^{-1}}^T \boldsymbol{P}_{i},
\end{equation}
for $i=0,1,2,\ldots,$ the following properties hold:
\begin{enumerate}
  \item $(\boldsymbol{I}+\boldsymbol{B}\boldsymbol{K}_{i+1})^{-1}\boldsymbol{A}$ exists and is Hurwitz,
  \item $\boldsymbol{P}^*\leq \boldsymbol{P}_{i+1}\leq \boldsymbol{P}_i$,
  \item $\lim_{i \to \infty}\boldsymbol{P}_i=\boldsymbol{P}^*$, $\lim_{i \to \infty}\boldsymbol{K}_i=\boldsymbol{K}^*$.
\end{enumerate}
\end{thm}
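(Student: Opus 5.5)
The plan is to reduce the statement to a Kleinman-type argument for the pair $(\boldsymbol{F},\boldsymbol{G})=(\boldsymbol{A}^{-1},\boldsymbol{A}^{-1}\boldsymbol{B})$, while keeping track of the factor $(\boldsymbol{I}+\boldsymbol{BK}_i)^T$ that the stated update \eqref{k_iteration} carries. That factor is where the stated update departs from the textbook one, and it is the step I expect to be the main obstacle.

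\textbf{Step 1 (reformulation).} Since $\boldsymbol{A}$ is nonsingular, the closed loop \eqref{DFC closed loop} with gain $\boldsymbol{K}$ gives $\boldsymbol{x}=\boldsymbol{A}^{-1}(\boldsymbol{I}+\boldsymbol{BK})\boldsymbol{\dot x}=(\boldsymbol{F}+\boldsymbol{GK})\boldsymbol{\dot x}$. Whenever $(\boldsymbol{I}+\boldsymbol{BK})^{-1}\boldsymbol{A}$ exists, its inverse is $\boldsymbol{F}+\boldsymbol{GK}$. A nonsingular matrix is Hurwitz iff its inverse is, because the eigenvalues are reciprocal and $\mathrm{Re}(1/\lambda)=\mathrm{Re}(\lambda)/|\lambda|^2$. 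Hence $\boldsymbol{K}$ is a stabilizing DFC gain iff $\boldsymbol{A}_K^{-1}:=\boldsymbol{F}+\boldsymbol{GK}$ is nonsingular and Hurwitz. For such $\boldsymbol{K}$, equation \eqref{p_iteration} is a Lyapunov equation for a Hurwitz matrix, so it has a unique solution $\boldsymbol{P}_i\ge 0$, and $\boldsymbol{P}_i>0$ follows from the observability assumption. By the same token, \eqref{DFC ARE} is exactly the standard ARE for $(\boldsymbol{F},\boldsymbol{G},\boldsymbol{Q},\boldsymbol{R})$, and \eqref{DFC optimal gain} is its optimal gain.

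\textbf{Step 2 (comparison identity for $\boldsymbol{P}_i-\boldsymbol{P}_{i+1}$).} Write $\boldsymbol{A}_{i+1}^{-1}=\boldsymbol{A}_i^{-1}+\boldsymbol{G}(\boldsymbol{K}_{i+1}-\boldsymbol{K}_i)$. Substituting into \eqref{p_iteration} at index $i$ and completing the square in $\boldsymbol{K}_{i+1}-\boldsymbol{K}_i$ should give
\begin{equation*}
(\boldsymbol{P}_i-\boldsymbol{P}_{i+1})\boldsymbol{A}_{i+1}^{-1}+{\boldsymbol{A}_{i+1}^{-1}}^T(\boldsymbol{P}_i-\boldsymbol{P}_{i+1}) = -\boldsymbol{\Delta}_i ,
\end{equation*}
where $\boldsymbol{\Delta}_i$ is built from $(\boldsymbol{K}_{i+1}-\boldsymbol{K}_i)^T\boldsymbol{R}(\boldsymbol{K}_{i+1}-\boldsymbol{K}_i)$ together with the cross term $\boldsymbol{P}_i\boldsymbol{G}+\boldsymbol{K}_{i+1}^T\boldsymbol{R}$. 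Under \eqref{k_iteration} this cross term equals $\boldsymbol{P}_i\boldsymbol{G}-\boldsymbol{P}_i\boldsymbol{A}_i^{-1}\boldsymbol{B}=-\boldsymbol{P}_i\boldsymbol{G}\boldsymbol{K}_i\boldsymbol{B}$, using $\boldsymbol{A}_i^{-1}\boldsymbol{B}=\boldsymbol{G}+\boldsymbol{G}\boldsymbol{K}_i\boldsymbol{B}$. This is the main obstacle: the cross term vanishes only in the Kleinman case, so one must show that the extra symmetric contribution $-\boldsymbol{P}_i\boldsymbol{G}\boldsymbol{K}_i\boldsymbol{B}(\boldsymbol{K}_{i+1}-\boldsymbol{K}_i)-(\cdot)^T$ is dominated by the quadratic term.

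My first attempt is to factor $\boldsymbol{K}_{i+1}=\boldsymbol{R}^{-1}(\boldsymbol{I}+\boldsymbol{B}^T\boldsymbol{K}_i^T)\boldsymbol{R}\hat{\boldsymbol{K}}_{i+1}$, where $\hat{\boldsymbol{K}}_{i+1}=-\boldsymbol{R}^{-1}\boldsymbol{G}^T\boldsymbol{P}_i$ is the Kleinman gain. I would then try to show that $\boldsymbol{\Delta}_i\ge 0$ via a weighted square $\|\boldsymbol{R}^{1/2}(\cdot)\|^2$ adapted to this factor. If that fails, I would look for a sufficient smallness condition on $\boldsymbol{K}_i\boldsymbol{B}$ that keeps $\boldsymbol{\Delta}_i\ge 0$ and is preserved along the iteration.

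\textbf{Step 3 (stability of the new gain, item 1).} Assume $\boldsymbol{\Delta}_i\ge0$. Write \eqref{p_iteration} at index $i$ in terms of $\boldsymbol{A}_{i+1}^{-1}$. Take an eigenvector $\boldsymbol{v}$ of $\boldsymbol{A}_{i+1}^{-1}$ with eigenvalue $\lambda$. Then $2\mathrm{Re}(\lambda)\,\boldsymbol{v}^*\boldsymbol{P}_i\boldsymbol{v}=-\boldsymbol{v}^*(\boldsymbol{Q}+\boldsymbol{K}_{i+1}^T\boldsymbol{R}\boldsymbol{K}_{i+1}+\boldsymbol{\Delta}_i)\boldsymbol{v}\le 0$, and observability of $(\boldsymbol{Q}^{1/2},\cdot)$ excludes equality. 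This gives $\mathrm{Re}(\lambda)<0$, and in particular $\lambda\ne0$. So $\boldsymbol{A}_{i+1}^{-1}$ is nonsingular, which means $\boldsymbol{I}+\boldsymbol{BK}_{i+1}$ is invertible and $(\boldsymbol{I}+\boldsymbol{BK}_{i+1})^{-1}\boldsymbol{A}$ exists. Since $\boldsymbol{A}_{i+1}^{-1}$ is Hurwitz, so is its inverse by Step 1.

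\textbf{Step 4 (items 2 and 3).} Hurwitzness of $\boldsymbol{A}_{i+1}^{-1}$ together with $\boldsymbol{\Delta}_i\ge0$ in the identity of Step 2 yields $\boldsymbol{P}_{i+1}\le\boldsymbol{P}_i$. The bound $\boldsymbol{P}^*\le\boldsymbol{P}_{i+1}$ comes from the analogous identity comparing $\boldsymbol{P}_{i+1}$ with the ARE \eqref{DFC ARE}. In that identity the remainder is $(\boldsymbol{K}_{i+1}-\boldsymbol{K}^*)^T\boldsymbol{R}(\boldsymbol{K}_{i+1}-\boldsymbol{K}^*)\ge0$, and no cross-term issue arises there. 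The sequence $\{\boldsymbol{P}_i\}$ is then monotone and bounded below, so it converges to some $\bar{\boldsymbol{P}}$. Passing to the limit in \eqref{p_iteration}--\eqref{k_iteration} shows that $\bar{\boldsymbol{P}}$ solves \eqref{DFC ARE}, because at a fixed point the cross term vanishes. Uniqueness of the stabilizing ARE solution then gives $\bar{\boldsymbol{P}}=\boldsymbol{P}^*$, and continuity gives $\boldsymbol{K}_i\to\boldsymbol{K}^*$.
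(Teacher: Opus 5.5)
The gap is exactly where you flagged it, and it cannot be closed. Step 2 never establishes $\Delta_i\ge 0$; you only list strategies to try. Steps 3 and 4 then assume it. Moreover, with the update \eqref{k_iteration} taken literally, the theorem is false.

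Take $n=m=1$ and $A=B=Q=R=1$. A DFC gain is stabilizing iff $(1+K)^{-1}A<0$, i.e.\ $K<-1$. For any such $K_1$, \eqref{p_iteration} gives $P_1=-(1+K_1^2)/(2(1+K_1))>0$, and \eqref{k_iteration} gives $K_2=-(1+K_1)P_1=(1+K_1^2)/2>0$. So $(1+K_2)^{-1}A>0$ is not Hurwitz, and item 1 fails at the first step for every admissible $K_1$. Hence no weighted-square argument can give $\Delta_1\ge 0$. Your fallback is also unavailable, since every stabilizing gain here has $K_iB<-1$.

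Step 4 fails independently of Step 2. $K_{i+1}-K_i\to 0$ removes the cross term from your comparison identity, but it does not make the limit satisfy \eqref{DFC optimal gain}. If $\bar P$ solves \eqref{p_iteration} for $\bar K$, the left-hand side of \eqref{DFC ARE} at $\bar P$ equals $-E^TRE$ with $E=\bar K+R^{-1}B^T{A^{-1}}^T\bar P$. A fixed point of \eqref{k_iteration}, however, has $E=-R^{-1}B^T\bar K^TB^T{A^{-1}}^T\bar P$, which is nonzero in general. In the example, the literal map $K\mapsto(1+K^2)/2$ has the single fixed point $K=1$, which is destabilizing.

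The missing idea is to treat the factor $(I+BK_i)^T$ as a misprint in \eqref{k_iteration}, not as a term to be dominated. The update should read $K_{i+1}=-R^{-1}B^T{A^{-1}}^TP_i$. This is the update whose fixed point is \eqref{DFC optimal gain}. It is also the one behind the data identity \eqref{equal_cost_equations}, whose derivation writes $x=A_i^{-1}\dot x-A^{-1}B(u+K_i\dot x)$ and then uses $B^T{A^{-1}}^TP_i=-RK_{i+1}$.

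The paper's sketch is Kleinman's argument for this corrected update: Lyapunov function $x^TP_ix$ with LaSalle for item 1, difference of consecutive Lyapunov equations for item 2, and monotone convergence plus passage to the limit for item 3. It does not go through for the literal one. With the corrected update, your cross term $P_iG+K_{i+1}^TR$ vanishes identically and $\Delta_i=(K_{i+1}-K_i)^TR(K_{i+1}-K_i)\ge 0$. Your Steps 1, 3 and 4 then complete the proof along the same lines as the paper, with your eigenvector argument replacing LaSalle. In the example, the corrected update is $K_2=-P_1=(1+K_1^2)/(2(1+K_1))$, which is Newton's iteration and converges to $K^*=-1-\sqrt2$.
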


{\color{black}
\begin{proof}
Detailed proof of this theorem is found in \cite{zaheer2023derivative}, and here we provide sketch proof for completeness.  

Property (1) states that the closed-loop system under the DFC gain $\boldsymbol{K}_{i+1}$ is stable if $\boldsymbol{K}_{i}$ is a stabilizing gain. This can be proved by considering the Lyapunov function ${V_i}( \boldsymbol{x}(t)) = \boldsymbol{x}^T (t) \boldsymbol{P}_i \boldsymbol{x}(t)$. The time derivative of $V_i$ along the trajectories of the states under the control $\boldsymbol{K}_{i+1}$ can be shown to be negative semi-definite from \eqref{p_iteration}-\eqref{k_iteration}, and the LaSalle's invariance principle completes the stability proof.

Property (2) states the monotonic convergence the matrix $\boldsymbol{P}_i$ to $\boldsymbol{P}^*$. This can be proven by considering the difference of \eqref{p_iteration} for $i$ and $i+1$, which reduces to another ARE with positive semi-definite matrix $(\boldsymbol{P}_i-\boldsymbol{P}_{i+1})\geq 0$. The existence of this positive semi-definite matrix is guaranteed by the stability of the closed-loop system under both $\boldsymbol{K}_i$ and $\boldsymbol{K}_{i+1}$. 

Property (3) states that the sequences of positive definite matrices $\boldsymbol{P}_i$ and DFC gain $\boldsymbol{K}_i$ asymptotically approach their optimal values. This is proven with the help of the Monotone Convergence Theorem, and by observing that the limits of \eqref{p_iteration} and \eqref{k_iteration} satisfy  \eqref{DFC ARE} and \eqref{DFC optimal gain}, respectively. This concludes the proof.
\end{proof}}

\textcolor{black}{
Theorem \ref{thm_1} establishes theoretical stability and convergence guarantees for the PI framework. Furthermore, for a cost function 
\begin{equation}\label{cost_iteration}
V_{i}(\boldsymbol{\dot{x}},\boldsymbol{u},t) = \int_{t}^{\infty} \boldsymbol{\dot{x}}^T(\tau)(\boldsymbol{Q}+
\boldsymbol{K}_{i}^T \boldsymbol{RK}_{i}) \boldsymbol{\dot{x}}(\tau) d\tau,
\end{equation}
under the DFC gain $\boldsymbol{K}_i$, it comes from \eqref{p_iteration} that the corresponding cost value is
\begin{equation}\label{finite_value_iteration}
{V_i}( \boldsymbol{x}(t)) = \boldsymbol{x}^T (t) \boldsymbol{P}_i \boldsymbol{x}(t).
\end{equation}
A data-driven implementation of the results of Theorem~\ref{thm_1} is presented next such that it achieves equivalent convergence without knowledge of the system matrices $\boldsymbol{A}$ and $\boldsymbol{B}$.}

\subsection{Model-free Policy Iteration}\label{model free}
A model-free PI algorithm can be derived by extending the iterative calculations presented in Theorem \ref{thm_1}. Specifically, the goal is to solve \eqref{p_iteration} and \eqref{k_iteration} without knowledge of the system matrices, but instead leveraging measurements of the input and the state.
 
For system \eqref{LTI system} under the DFC \eqref{DFC gain}, it comes from the derivation in \cite{jiang} and \cite{zaheer2023derivative} that Eqs.~\eqref{k_iteration}, \eqref{cost_iteration}, and \eqref{finite_value_iteration}  yield
\begin{multline}\label{equal_cost_equations}
 \boldsymbol{x}^T(t)\boldsymbol{P}_i  \boldsymbol{x}(t)- \boldsymbol{x}^T(t+T)\boldsymbol{P}_i \boldsymbol{x}(t+T) =\\
\int_{t}^{t+T} \dot{\boldsymbol{x}}^T(\tau)(\boldsymbol{Q}+\boldsymbol{K}_i^T \boldsymbol{RK}_i) \dot{\boldsymbol{x}}(\tau) d\tau\\
-2\int_{t}^{t+T}\left( \boldsymbol{u}(\tau)+\boldsymbol{K}_i \dot{\boldsymbol{x}}(\tau) \right) \T 
\boldsymbol{RK}_{i+1}   \dot{\boldsymbol{x}}(\tau)d\tau,
\end{multline}
for $T>0$. The first term in the right-hand side of \eqref{equal_cost_equations} can be written as 
\begin{multline}\label{right_equal_cost_equations}
\lefteqn{\int_{t}^{t+T} \boldsymbol{\dot{x}}\T(\tau)(\boldsymbol{Q}+\boldsymbol{K}_i\T \boldsymbol{RK}_i) \boldsymbol{\dot{x}}(\tau) d\tau} \\
= \boldsymbol{I}_{xx}^{t,t+T}\text{vec}(\boldsymbol{Q}+\boldsymbol{K}_i\T \boldsymbol{RK}_i),
\end{multline}
while the second integral is
\begin{multline}\label{integrations_equal_cost_equations}
\lefteqn{\int_{t}^{t+T} \left( \boldsymbol{u}(\tau)+\boldsymbol{K}_i  \boldsymbol{\dot{x}}
(\tau) \right) \T \boldsymbol{RK}_{i+1}  \boldsymbol{\dot{x}}(\tau)d\tau }\\
=\left( \boldsymbol{I}_{xx}^{t,t+T}(\boldsymbol{I}_n \otimes \boldsymbol{K}_i\T \boldsymbol{R})\right. \\ \left. +\boldsymbol{I}_{xu}^{t,t+T} (\boldsymbol{I}_n \otimes \boldsymbol{R}) \right) \text{vec}(\boldsymbol{K}_{i+1}),
\end{multline}
where
\begin{align*}
\boldsymbol{I}_{xx}^{t,t+T}&=\int_{t}^{t+T} \boldsymbol{\dot{x}}
\T(\tau)\otimes  \boldsymbol{\dot{x}}\T(\tau) d\tau,\\
\boldsymbol{I}_{xu}^{t,t+T}&=\int_{t}^{t+T} \boldsymbol{\dot{x}}\T(\tau)\otimes \boldsymbol{u}\T(\tau) d\tau.
\end{align*}

Let $\bar{\boldsymbol{x}} = \boldsymbol{x} + \boldsymbol{x}_\b$ be a biased measurement of the system state $\boldsymbol{x}$, where $\boldsymbol{x}_\b$ represents an unknown static bias. Then, the true state is $\boldsymbol{x} = \boldsymbol{\bar{x}}-\boldsymbol{x}_{\b}$ and,
\begin{equation*}\
\boldsymbol{x}\T(t)\boldsymbol{P}_i \boldsymbol{x}(t)=\boldsymbol{\bar{x}}\T(t)\boldsymbol{P}_i \boldsymbol{\bar{x}}(t)+\epsilon\T \boldsymbol{\bar{x}}+\boldsymbol{x}_\b\T \boldsymbol{P}_i \boldsymbol{x}_\b,
\end{equation*}
for $\boldsymbol{\epsilon}=-2\boldsymbol{P}_i \boldsymbol{x}_\b$. Accordingly, the left-hand side of 
\eqref{equal_cost_equations} is
\begin{align}\label{repre_left_2}
&\boldsymbol{x}\T(t)\boldsymbol{P}_i \boldsymbol{x}(t)-\boldsymbol{x}\T(t+T)\boldsymbol{P}_i \boldsymbol{x}(t+T)\\ \nonumber =&\text{vec}(\boldsymbol{P}_i)\T \left( \boldsymbol{x}_\kappa(t+T)- \boldsymbol{x}_\kappa(t) \right)
+\boldsymbol{\epsilon}\T \left( \boldsymbol{\bar{x}}(t+T)- \boldsymbol{\bar{x}}(t) \right),
\end{align}
where $\boldsymbol{x}_\kappa$ represents a polynomial basis derived
from the measured system states, 
$\boldsymbol{x}_\kappa=\boldsymbol{\bar{x}} \otimes \boldsymbol{\bar{x}}$.
For $N$ data sample sets collected over a
time intervals $[(j-1)T,jT]$, where $j=1,2,\cdots,N$, 
Eqs.~\eqref{equal_cost_equations}-\eqref{repre_left_2} 
result in,
\begin{equation}\label{vectorized_form}
\boldsymbol{X}_i \begin{bmatrix}
\text{vec}(\boldsymbol{P}_i),
\boldsymbol{\epsilon},
\text{vec} (\boldsymbol{K}_{i+1})
\end{bmatrix}\T= \boldsymbol{Y}_i,
\end{equation}
where
\begin{align*}
\boldsymbol{X}_i&=\begin{bmatrix}\boldsymbol{\Delta}_{x_\kappa},\boldsymbol{\Delta}_x,-2\boldsymbol{I}_{xx}(\boldsymbol{I}_n \otimes \boldsymbol{K}_i\T \boldsymbol{R})-2\boldsymbol{I}_{xu}(\boldsymbol{I}_n \otimes \boldsymbol{R})\end{bmatrix},\\
\boldsymbol{\Delta}_{\boldsymbol{x}_\kappa}&=\begin{bmatrix}\boldsymbol{x}_\kappa(T)\!-\!\boldsymbol{x}_\kappa(0),~\hdots,~\boldsymbol{x}_\kappa(NT)\!-\!\boldsymbol{x}_\kappa ((N-1)T)\end{bmatrix}\T,\\
\boldsymbol{\Delta}_{x}&=\begin{bmatrix}\boldsymbol{\bar{x}}(T)- \boldsymbol{\bar{x}}(0), & \hdots, &  \boldsymbol{\bar{x}}(NT)-\boldsymbol{\bar{x}} ((N-1)T)\end{bmatrix}\T,\\
\boldsymbol{Y}_i&=-\boldsymbol{I}_{xx}\text{vec}(\boldsymbol{Q}+\boldsymbol{K}_i\T \boldsymbol{RK}_i),\\
\boldsymbol{I}_{xx}&=\begin{bmatrix}\boldsymbol{I}_{xx}^{0,T}, & \boldsymbol{I}_{xx}^{T,2T}, & \hdots, & \boldsymbol{I}_{xx}^{(N-1)T,NT}\end{bmatrix}\T,\\
\boldsymbol{I}_{xu}&=\begin{bmatrix}\boldsymbol{I}_{xu}^{0,T}, & \boldsymbol{I}_{xu}^{T,2T}, & \hdots, & \boldsymbol{I}_{xu}^{(N-1)T,NT}\end{bmatrix}\T,
\end{align*}
and \(\boldsymbol{I}_n\) is the identity matrix with dimension $n\times n$. 
Assuming that $\boldsymbol{X}_i$ is a full column rank, a solution to \eqref{vectorized_form} exists as 
\begin{equation}\label{inv_vectorized_form}
\begin{bmatrix}
\text{vec}(\hat{\boldsymbol{P}}_i), \hat{\boldsymbol{\epsilon}},
\text{vec} (\hat{\boldsymbol{K}}_{i+1})
\end{bmatrix}\T= (\boldsymbol{X}_i\T
\boldsymbol{X}_i)^{-1}\boldsymbol{X}_i\T  \boldsymbol{Y}_i,
\end{equation}
which is equivalent to the solution to \eqref{p_iteration}
and \eqref{k_iteration}. Moreover, if $\boldsymbol{K}_1$ 
is a stabilizing gain for \eqref{DFC closed loop},
then  Theorem \ref{thm_1} implies that $\hat{\boldsymbol{P}}_i$ and $\hat{\boldsymbol{K}}_{i+1}$
must converge to $\boldsymbol{P}^*$ and $\boldsymbol{K}^*$, respectively, 
as $i \to \infty$. This result is formalized in the following theorem. 

\begin{thm}
    \label{thm:convergence}
If the data matrix $\boldsymbol{X}_i$ is full column rank, then $\hat{\boldsymbol{P}}_i$ and $\hat{\boldsymbol{K}}_{i+1}$ obtained from \eqref{inv_vectorized_form} equal the solutions of \eqref{p_iteration} and \eqref{k_iteration}. Moreover, if the initial gain $\boldsymbol{K}_1$ stabilizes \eqref{DFC closed loop}, then the sequences $\hat{\boldsymbol{P}}_i$ and $\hat{\boldsymbol{K}}_{i}$ asymptotically converge to the optimal values $\boldsymbol{P}^*$ and $\boldsymbol{K}^*$ as $i\to\infty$.
\end{thm}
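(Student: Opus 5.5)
The plan is a standard uniqueness-plus-induction argument in the style of \cite{jiang}: first show that the true iterates of Theorem~\ref{thm_1} satisfy \eqref{vectorized_form} exactly, then use full column rank of $\boldsymbol{X}_i$ to conclude that the least-squares solution coincides with them, and finally transfer the convergence properties of Theorem~\ref{thm_1}.

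\textbf{Step 1: the true iterates satisfy the data equation.} Fix $i$ and assume $\boldsymbol{K}_i$ is stabilizing. Let $\boldsymbol{P}_i$ be the unique positive definite solution of \eqref{p_iteration}, and let $\boldsymbol{K}_{i+1}$ be given by \eqref{k_iteration}. Along any trajectory of \eqref{LTI system} with arbitrary input $\boldsymbol{u}$, rewrite the dynamics as
\begin{equation*}
\boldsymbol{x}=\boldsymbol{A}_i^{-1}\dot{\boldsymbol{x}}-\boldsymbol{A}^{-1}\boldsymbol{B}(\boldsymbol{u}+\boldsymbol{K}_i\dot{\boldsymbol{x}}).
\end{equation*}
Then differentiate $\boldsymbol{x}^T\boldsymbol{P}_i\boldsymbol{x}$:
\begin{equation*}
\frac{d}{dt}\left(\boldsymbol{x}^T\boldsymbol{P}_i\boldsymbol{x}\right)=2\dot{\boldsymbol{x}}^T\boldsymbol{P}_i\boldsymbol{x}.
\end{equation*}
Substituting the expression for $\boldsymbol{x}$ and using \eqref{p_iteration} for the symmetric part gives
\begin{equation*}
2\dot{\boldsymbol{x}}^T\boldsymbol{P}_i\boldsymbol{A}_i^{-1}\dot{\boldsymbol{x}}=-\dot{\boldsymbol{x}}^T(\boldsymbol{Q}+\boldsymbol{K}_i^T\boldsymbol{R}\boldsymbol{K}_i)\dot{\boldsymbol{x}}.
\end{equation*}
The cross term with $\boldsymbol{A}^{-1}\boldsymbol{B}$ is handled by \eqref{k_iteration}. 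Care is needed with the transposition convention, since the paper writes the gain using ${\boldsymbol{A}_i^{-1}}^T\boldsymbol{P}_i$; I would check that this matches $\boldsymbol{R}\boldsymbol{K}_{i+1}$ up to the form stated in \eqref{equal_cost_equations}. Integrating over $[t,t+T]$ yields \eqref{equal_cost_equations}. The bias decomposition \eqref{repre_left_2} is an exact algebraic identity, because the $\boldsymbol{x}_\b^T\boldsymbol{P}_i\boldsymbol{x}_\b$ terms cancel in the difference. Together with the Kronecker identities \eqref{right_equal_cost_equations}--\eqref{integrations_equal_cost_equations}, which follow from $\mathrm{vec}(\boldsymbol{a}^T\boldsymbol{M}\boldsymbol{b})=(\boldsymbol{b}^T\otimes\boldsymbol{a}^T)\mathrm{vec}(\boldsymbol{M})$, this shows that
\begin{equation*}
\left[\mathrm{vec}(\boldsymbol{P}_i),\ -2\boldsymbol{P}_i\boldsymbol{x}_\b,\ \mathrm{vec}(\boldsymbol{K}_{i+1})\right]^T
\end{equation*}
solves \eqref{vectorized_form} exactly for every sampling interval.

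\textbf{Step 2: uniqueness.} Since $\boldsymbol{X}_i$ has full column rank, $\boldsymbol{X}_i^T\boldsymbol{X}_i$ is invertible, so \eqref{vectorized_form} has at most one solution. An exact solution exists by Step~1, so \eqref{inv_vectorized_form} returns it. Hence $\hat{\boldsymbol{P}}_i=\boldsymbol{P}_i$ and $\hat{\boldsymbol{K}}_{i+1}=\boldsymbol{K}_{i+1}$.

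A subtle point concerns the parametrization. $\mathrm{vec}(\boldsymbol{P}_i)$ contains duplicated off-diagonal entries, and $\boldsymbol{x}_\kappa=\bar{\boldsymbol{x}}\otimes\bar{\boldsymbol{x}}$ has repeated columns, so literal full column rank would fail. I would read the hypothesis as referring to the reduced parametrization over the $n(n+1)/2$ independent entries of $\boldsymbol{P}_i$, and note that symmetry of the recovered $\hat{\boldsymbol{P}}_i$ is then automatic.

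\textbf{Step 3: induction and convergence.} Start from a stabilizing $\boldsymbol{K}_1$ and run an induction on $i$. At each step, Step~2 gives $\hat{\boldsymbol{K}}_{i+1}=\boldsymbol{K}_{i+1}$. Property~1 of Theorem~\ref{thm_1} then makes this gain stabilizing, so the hypothesis of Step~1 holds again at $i+1$. Since the data-driven sequences coincide with the model-based ones, property~3 of Theorem~\ref{thm_1} gives $\hat{\boldsymbol{P}}_i\to\boldsymbol{P}^*$ and $\hat{\boldsymbol{K}}_i\to\boldsymbol{K}^*$.

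I expect Step~1 to be the main obstacle. It requires carefully verifying the cross-term identity with $\boldsymbol{K}_{i+1}$ under the paper's transposition conventions for $\boldsymbol{A}_i^{-1}$. It also requires checking that $\boldsymbol{u}$ need not equal $-\boldsymbol{K}_i\dot{\boldsymbol{x}}$ during data collection; this off-policy property is exactly what the $(\boldsymbol{u}+\boldsymbol{K}_i\dot{\boldsymbol{x}})$ term accounts for.
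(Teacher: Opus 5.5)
Your route is the paper's route. The paper's proof simply asserts that, under full column rank, \eqref{inv_vectorized_form} returns the model-based iterates, and then invokes Theorem~\ref{thm_1}. Your Steps 1--3 (off-policy derivation of the data equation, uniqueness of the least-squares solution, induction keeping every gain stabilizing) fill in what that sketch omits. Your remark on the symmetric parametrization is also right and matches the count $L=\frac{n(n+1)}{2}+n+mn$ in the paper's Remark.

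The genuine gap is the check you deferred in Step~1. It does not close against \eqref{k_iteration} as printed, and the reason is not a transposition convention. Carrying out your substitution gives
\[
\frac{d}{dt}\bigl(\boldsymbol{x}^T\boldsymbol{P}_i\boldsymbol{x}\bigr)=-\dot{\boldsymbol{x}}^T(\boldsymbol{Q}+\boldsymbol{K}_i^T\boldsymbol{R}\boldsymbol{K}_i)\dot{\boldsymbol{x}}-2(\boldsymbol{u}+\boldsymbol{K}_i\dot{\boldsymbol{x}})^T\boldsymbol{B}^T(\boldsymbol{A}^{-1})^T\boldsymbol{P}_i\dot{\boldsymbol{x}}.
\]
So \eqref{equal_cost_equations} is satisfied by $\boldsymbol{K}_{i+1}=-\boldsymbol{R}^{-1}\boldsymbol{B}^T(\boldsymbol{A}^{-1})^T\boldsymbol{P}_i$, and by full column rank this is the only gain \eqref{inv_vectorized_form} can return. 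The printed \eqref{k_iteration} instead contains $\boldsymbol{B}^T(\boldsymbol{A}_i^{-1})^T\boldsymbol{P}_i=\boldsymbol{B}^T(\boldsymbol{I}+\boldsymbol{B}\boldsymbol{K}_i)^T(\boldsymbol{A}^{-1})^T\boldsymbol{P}_i$, which has an extra factor $(\boldsymbol{I}+\boldsymbol{B}\boldsymbol{K}_i)^T$. The vector you build in Step~1 with that $\boldsymbol{K}_{i+1}$ therefore does not in general solve \eqref{vectorized_form}.

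You need to state explicitly that \eqref{k_iteration} must be read with $\boldsymbol{A}^{-1}$ in place of $\boldsymbol{A}_i^{-1}$. That reading is the one the rest of the paper requires:
it agrees with \eqref{DFC optimal gain};
its fixed point together with \eqref{p_iteration} is exactly \eqref{DFC ARE};
it yields the Kleinman identity $\boldsymbol{P}_i\boldsymbol{A}_{i+1}^{-1}+(\boldsymbol{A}_{i+1}^{-1})^T\boldsymbol{P}_i=-\boldsymbol{Q}-\boldsymbol{K}_{i+1}^T\boldsymbol{R}\boldsymbol{K}_{i+1}-(\boldsymbol{K}_{i+1}-\boldsymbol{K}_i)^T\boldsymbol{R}(\boldsymbol{K}_{i+1}-\boldsymbol{K}_i)$, on which property~1 of Theorem~\ref{thm_1} rests.

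The printed version does not even preserve stability. Take scalar $A=B=Q=R=1$ and $K_1=-2$, so the closed loop is $\dot x=-x$. Then \eqref{p_iteration} gives $P_1=5/2$. The printed \eqref{k_iteration} gives $K_2=5/2$, with $(1+K_2)^{-1}A=2/7>0$. The corrected formula gives $K_2=-5/2$, with $(1+K_2)^{-1}A=-2/3$.

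So your Step~3 induction, like the paper's appeal to Theorem~\ref{thm_1}, is valid only for the corrected iteration. With that correction stated, your argument is complete.

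There is a smaller slip of the same kind. \eqref{repre_left_2} holds only up to an overall sign: its right-hand side equals $\boldsymbol{x}^T(t+T)\boldsymbol{P}_i\boldsymbol{x}(t+T)-\boldsymbol{x}^T(t)\boldsymbol{P}_i\boldsymbol{x}(t)$. However, \eqref{vectorized_form} with $\boldsymbol{Y}_i=-\boldsymbol{I}_{xx}\text{vec}(\boldsymbol{Q}+\boldsymbol{K}_i^T\boldsymbol{R}\boldsymbol{K}_i)$ already uses the correct sign, so nothing downstream changes.
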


\begin{proof}
\textcolor{black}{
    Provided the data matrix $\boldsymbol{X}_i$ has full column rank, the solution of \eqref{inv_vectorized_form} produces $\hat{\boldsymbol{P}}_i$ and $\hat{\boldsymbol{K}}_{i+1}$ that are equivalent to $\boldsymbol{P}_i$ and $\boldsymbol{K}_{i+1}$ in \eqref{vectorized_form}, and to the solution to the model-based equations \eqref{p_iteration} and \eqref{k_iteration}. Therefore, the convergence guarantees established in Theorem \ref{thm_1} are directly applicable, ensuring that the data-driven estimates $\hat{\boldsymbol{P}}_i$ and $\hat{\boldsymbol{K}}_{i+1}$ converge to the true optimal values.}
\end{proof}

\textcolor{black}{
The implementation details of the proposed model-free PI calculation of \eqref{inv_vectorized_form} are presented in Algorithm~\ref{algorithm1}. The algorithm is structured with an outer $\kappa$ epoch loop that repeatedly runs the inner $i$ loop of the PI algorithm to improve robustness to noise and bias in the training data and to aid the PI algorithm in converging to the true optimal policy. This is achieved in each epoch iteration by repeating the collection of training data and the PI process with the policy/control from the previous epoch as the initial control gain. A threshold factor $\bar{\zeta}$ is introduced to evaluate the solution across subsequent epochs, and the learning process is terminated if the magnitude variation of the cost function value is below the prescribed $\bar{\zeta}$.}

\begin{algorithm}[!t] \label{algorithm1}
\caption{Multi-Epoch Model-Free Online PI}
\begin{algorithmic}[1]
\REQUIRE Initial stabilizing gain $\boldsymbol{K}_{1}$, inner convergence tolerance $\bar{\eta}$ and outer convergence tolerance $\bar{\zeta}$.
\ENSURE Optimal DFC gain $\boldsymbol{K}^{*}$.

\STATE Initialize epoch index $\kappa \leftarrow 1$.
\STATE Initialize control gain for the epoch $\boldsymbol{\mathcal{K}}_\kappa=\boldsymbol{K}_1$
\STATE Initialize cost deviation $\Delta V \leftarrow \infty$.

\WHILE{$\Delta V \geq \bar{\zeta}$}
    \STATE \textbf{Step 1: Data Collection}
    \STATE Apply control $\boldsymbol{u}(t)\! =\! -\boldsymbol{K}_{1}\boldsymbol{\dot{x}}(t)\!+\!\text{excitation noise}$.
    \STATE Collect samples of $\boldsymbol{x}$, $\dot{\boldsymbol{x}}$, and $\boldsymbol{u}$ in $[t, t+NT]$.

    \STATE \textbf{Step 2: Policy Evaluation and Improvement}
    \STATE Initialize inner iteration index $i \leftarrow 1$.
    \REPEAT
        \STATE Construct $\boldsymbol{X}_i$ and vectors $\boldsymbol{Y}_i$.
        \STATE Solve \eqref{inv_vectorized_form} for $\boldsymbol{P}_i$ and $\boldsymbol{K}_{i+1}$.
    \UNTIL{$||\boldsymbol{P}_i - \boldsymbol{P}_{i-1}|| < \bar{\eta}$ (for $i >1)$}

    \STATE \textbf{Step 3: Epoch Update}
    \STATE Calculate cost function using \eqref{cost_iteration} for $\boldsymbol{K}_{i}$.
    \STATE Increment epoch counter: $\kappa \leftarrow \kappa + 1$.
    \STATE Update control gain for next epoch: $\boldsymbol{\mathcal{K}}_{\kappa} \leftarrow \boldsymbol{K}_{i}$.
    \STATE Update $\Delta V = |V_{\kappa} - V_{\kappa-1}|$ (for $\kappa > 0$).
    \STATE $\boldsymbol{K}_1\leftarrow \boldsymbol{\mathcal{K}}_\kappa$
\ENDWHILE

\STATE \textbf{Return} $\boldsymbol{K}^{*} = \boldsymbol{\mathcal{K}}_{\kappa}$.
\end{algorithmic}
\end{algorithm}

\textcolor{black}{
The algorithm is initialized with a stabilizing control gain $\boldsymbol{K}_1$. According to Theorem~\ref{thm:convergence}, the closed-loop system remains stable throughout Algorithm~\ref{algorithm1}. As a result, each trained gain $\boldsymbol{K}_i$ calculated by the iteration loop and $\boldsymbol{\mathcal{K}_{\kappa}}$ by the epoch loop 
preserve the closed-loop stability. Furthermore, the gain $\boldsymbol{K}_i$ for each epoch satisfy the convergence condition of Theorem~\ref{thm_1}.}

\begin{rem}
\textcolor{black}{
The multi-epoch PI algorithm is developed to enhance the robustness of the standard PI algorithm in the presence of noise and bias within training data. This increased robustness results in higher computational complexity, primarily due to the least-squares estimation step in the inner PI loop (Step 2). The computational complexity of this step is $O(L^3)$, where $L = \frac{n(n+1)}{2} + n + mn$ denotes the total number of unknown parameters in $\boldsymbol{P}_i$, $\epsilon$, and $\boldsymbol{K}_{i+1}$ \cite{jiang}. The addition of the outer epoch loop causes this process to repeat for a specified maximum number of epochs, resulting in total complexity that scales linearly with the number of epochs. Consequently, the method maintains a polynomial complexity class, which supports computational efficiency and suitability for implementation on standard processors.}
\end{rem}

\section{Data-Driven Derivative Feedback Control Through System Identification}\label{sec:SID}
\textcolor{black}{In order to carry out a practical comparison test to evaluate the proposed model-free PI solution, an indirect approach is considered in which an optimal DFC is designed from an identified model of \eqref{LTI system}. Here we present a composite system identification technique, which integrates DMDc for initial model estimation and PEM method for parameter refinement based on measured states and input signals \cite{Nedzhibov2023, Joshua2016}.} 

\subsection{Dynamic Mode Decomposition with Control}
{\color{black} From \(N\) observations of the state $\boldsymbol{x}(t_i)$ and input $\boldsymbol{u}(t_i)$, let $\boldsymbol{\chi}=[\boldsymbol{x}(t_1)~\boldsymbol{x}(t_2)~\cdots~\boldsymbol{x}(t_N)]\in\mathbb{R}^{n\times N}$ and $\boldsymbol{\Upsilon}=[\boldsymbol{u}(t_1)~\boldsymbol{u}(t_2)~\cdots~\boldsymbol{u}(t_N)]\in\mathbb{R}^{m\times N}$. It yields from \eqref{LTI system} that
\begin{equation}
   \boldsymbol{\dot{\chi}} = 
   \begin{bmatrix}
       \boldsymbol{A} & \boldsymbol{B}
   \end{bmatrix}
   \begin{bmatrix}
       \boldsymbol{\chi}\T & \boldsymbol{\Upsilon}\T
   \end{bmatrix}\T.
\end{equation} 
For sampled data matrices \( \boldsymbol{\dot{\chi}} \) and \( \begin{bmatrix} \boldsymbol{\chi}\T & \boldsymbol{\Upsilon}\T \end{bmatrix}\) collected from the plant to be identified, DMDc is performed to determine the best-fit mappings of \( \boldsymbol{A}\) and \( \boldsymbol{B}\). If we let
\begin{equation} \label{eq:io_mapping}
    \boldsymbol{\dot{\chi}} = \boldsymbol{\Gamma \Phi},
\end{equation}
where \( \boldsymbol{\Gamma}  =\boldsymbol{ \begin{bmatrix} A & B \end{bmatrix}}\) and \( \boldsymbol{\Phi}\T = \begin{bmatrix} \boldsymbol{\chi}\T & \boldsymbol{\Upsilon}\T
\end{bmatrix} \), it follows that
\begin{equation}\label{eq:Gamma_est}
   \boldsymbol{\Gamma} = \boldsymbol{\dot{\chi}} \boldsymbol{\Phi^{\dagger}}, 
\end{equation}
where \(\boldsymbol{\Phi^{\dagger}}\) is the Moore-Penrose inverse.  A numerically efficient inversion of the augmented data matrix \(\boldsymbol{\Phi}\) can be obtained by singular value decomposition (SVD), such that
\begin{equation}\label{eq:io_space}
    \boldsymbol{\Phi} = \begin{bmatrix}\boldsymbol{U}_q & \boldsymbol{U}_s\end{bmatrix}  \begin{bmatrix} \boldsymbol{\Sigma}_q & \boldsymbol{0} \\ \boldsymbol{0}  & \boldsymbol{\Sigma}_s
\end{bmatrix} \begin{bmatrix}\boldsymbol{V_q} &\boldsymbol{V_s}\end{bmatrix}^*,
\end{equation}
where the operator $^*$ denotes the conjugate transpose, $q\leq N_{\Phi} = \min(n+m,N)$ is a rank truncation integer to be determined, $\boldsymbol {\Sigma}_q \in \mathbb{R}^{q \times q}$ is a diagonal matrix with the highest $q$ number of singular values of $\boldsymbol{\Phi}$, $\boldsymbol {U}_q \in \mathbb{R}^{(n+m) \times q}$, and $\boldsymbol {V}_q^* \in \mathbb{R}^{q \times N}$. If $q$ is selected such that $\boldsymbol{\Sigma}_q$ contains the most dominant singular values of $\boldsymbol{\Phi}$, then it can be concluded that $\boldsymbol{\Phi}\approx \boldsymbol {U}_q \boldsymbol{\Sigma}_q \boldsymbol {V}_q^*$ and \eqref{eq:Gamma_est} yields
\begin{equation}
   \boldsymbol{\Gamma}  \approx \hat{\boldsymbol{\Gamma}} = \boldsymbol{\dot{\chi}} \boldsymbol{V}_q  \boldsymbol{\Sigma}_q^{-1} \boldsymbol{U}_q^*. 
\end{equation}
The corresponding approximations of \( \boldsymbol{A} \) as $\hat{\boldsymbol{A}}$ and \(\boldsymbol{B} \) as $\hat{\boldsymbol{B}}$ can now be extracted from $\hat{\boldsymbol{\Gamma}}$,
\begin{equation}
   \hat{\boldsymbol{\Gamma}} =\boldsymbol{[\hat{A} \quad \hat{B}] }.
\end{equation}

The truncation orders $q$ in \eqref{eq:io_space} is determined using an energy-based threshold criterion, which prescribes the smallest integer $q$ that retains a given minimum cumulative energy percentage $E_\text{min}$,
\begin{equation*}
 \frac{\sum_{i=1}^{q} \sigma_i^2}{\sum_{i=1}^{N_\Phi} \sigma_i^2} \geq E_\text{min},
\end{equation*}
where $\sigma_i$ is the $i^\text{th}$ singular value of $\boldsymbol{\Phi}$. This truncation order thus represents a critical trade-off between numerical efficiency and model fidelity under noisy data.
}

\subsection{Prediction Error Minimization}
To refine the identified model and attain higher fidelity, PEM is employed on the outcome of DMDc to refine the model parameters. A parameterized model structure 
\begin{equation}\label{eq:pem}
  \boldsymbol {\dot{\chi}} = \begin{bmatrix} \boldsymbol{\hat{A}}(\theta) & \boldsymbol{\hat{B}}(\theta) \end{bmatrix} \boldsymbol{\Phi} + \boldsymbol{W},
\end{equation} 
is now defined, where \( \theta \) represents the model parameters to be optimized, and \( \boldsymbol{W} \) accounts for process noise in the observations. The prediction error \( \mathbf{e}(t, \theta)\) is computed using
\begin{equation}\label{eq:pem-error}
\mathbf{e}(t, \theta) = \boldsymbol{\dot{\chi}} - \boldsymbol{\dot{\hat{\chi}}},
\end{equation}
where \( \boldsymbol{\dot{\chi}}\) is the measured derivatives and \( \boldsymbol{\dot{\hat{\chi}}} \) is the model-predicted derivatives defined as
\begin{equation}
  \boldsymbol{\dot{\hat{\chi}}} = \begin{bmatrix} \boldsymbol{\hat{A}}(\theta) & \boldsymbol{\hat{B}}(\theta) \end{bmatrix} \boldsymbol{\Phi}.
\end{equation}
Minimizing the prediction error cost function given by 
\begin{equation} 
  J(\theta) =   \langle \mathbf{e}(t, \theta), \mathbf{e}(t, \theta) \rangle,
\end{equation}
the PEM model estimate is obtained with
\begin{equation}\label{eq:obj}
\hat{\theta} = \arg \min_{\theta} J(\theta). 
\end{equation}
This problem is solved using gradient-based numerical optimization to minimize the energy of the prediction error. {\color{black} Estimates $\boldsymbol{\hat{A}}(\hat{\theta})$ and $\boldsymbol{\hat{B}}(\hat{\theta})$ are then obtained for the system matrices $\boldsymbol{A}$ and $\boldsymbol{B}$, respectively. These matrices provide an identified model of the system to design the optimal DFC by \eqref{DFC ARE} and \eqref{DFC optimal gain} and to compare to the proposed model-free PI solution on the AML system.}

\section{Implementation and Testing on a Magnetic Levitation System}\label{sec:ResultAndDiscussion}
\subsection{Active Magnetic Levitation (AML) System}\label{sec:model}
Figure~\ref{Schematic} illustrates the schematics of the AML test setup used in this study, with two electromagnetic actuators and two magnetic target disks. The electromagnetic actuators consist of two copper coils, with the direction of current determining the polarity of the magnetic fields. The currents in Coil~1 and Coil~2 are represented by \(\mathrm{U}_1\)  and \(\mathrm{U}_2\), respectively. The symbols \(\mathcal{N}\) and \(\mathcal{S}\) denote the north and south poles of the magnetic disks, as well as the polarity of the electromagnetic actuator for positive current input. The mass of the disks is represented by \(\textit{M}\). The distance from Coil 2 to Disk 2 is denoted by \(y_2\), the distance from Coil 1 to Disk 1 by \(y_1\), 
and the distance between the two coils by \(y_c\).

\begin{figure}[ht]
        \centering
\includegraphics[scale=0.75]{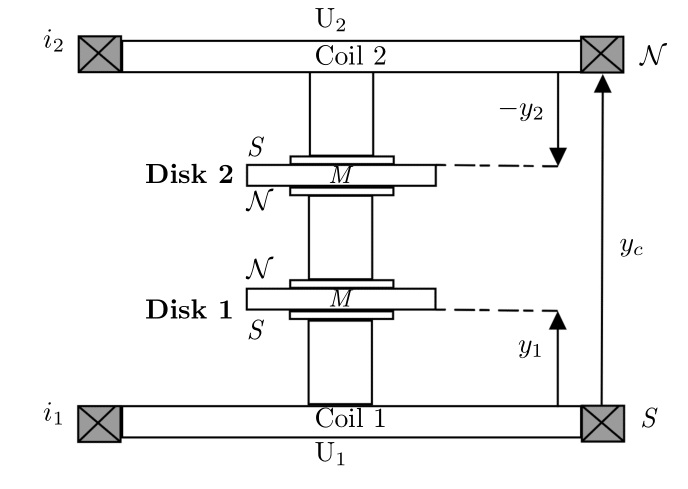}
        \caption{Schematic of the magnetic levitation system.}
        \label{Schematic}
\end{figure}

\begin{table}[ht]
\centering
\caption{System Parameters of the Magnetic Levitation System}
\label{tab:sys_params}
\begin{tabular}{lcc}
\hline
\textbf{Parameter} & \textbf{Symbol} & \textbf{Value} \\
\hline
Magnet Mass & $M$ & $0.126$ kg \\
Gravity & $g$ & $9.81$ m/s$^2$ \\
Damping Coefficient & $c_1,c_2$ & $0.96$ Kg/s \\
Actuator Gain (Inverse) & $a$ & $4.0442 \times 10^4$ A/(N$\cdot$m$^4$) \\
Actuator Offset & $b$ & $0.0591$ m \\
Force Parameter & $c$ & $4.4408 \times 10^{-8}$ N$\cdot$m$^4$ \\
Magnet-Magnet Offset & $d$ & $0.042$ m \\
Coil Distance & $y_c$ & $0.133$ m \\
\hline
\end{tabular}
\end{table}

The equations of motion governing the two-disk AML system are given as
\begin{align}
    \begin{bmatrix} \label{nonlinear equation}
            \ddot{y}_1\\
            \ddot{y}_2   
    \end{bmatrix}
    =\frac{1}{M}
    \begin{bmatrix}
            F_{u11} - F_{u12} - F_{m12} -c_1\dot{y}_1 - Mg\\
            F_{u22} - F_{u21} + F_{m12} - c_2\dot{y}_2 - Mg
    \end{bmatrix}.
\end{align}
where \(c_j\) is the damping coefficient of the magnets, \(g\) represents the acceleration due to gravity, \(F_{m12}\) is the force between Disks 1 and 2, and \(F_{ujk}\) is the force between Coil \(j\) and Disk \(k\) \cite{zaheer2021derivative}. The equations for the forces are given as
\begin{align*}
    F_{u11} &= \frac{\mathrm{U}_1}{a(y_1 + b)^4}, & 
    F_{u12} &= \frac{\mathrm{U}_1}{a(y_c + y_2 + b)^4}, \\
    F_{u21} &= \frac{\mathrm{U}_2}{a(y_c - y_1 + b)^4}, & 
    F_{u22} &= \frac{\mathrm{U}_2}{a(-y_2 + b)^4}, \\
    F_{m12} &= \frac{c}{(y_{12} + d)^4},
\end{align*}
where $a$, $b$, $c$, and $d$ are coefficients determined by the properties of
the magnetic configuration and \(y_{12}=y_c + y_2-y_1\). By neglecting the small forces \(F_{u12}\), \(F_{u21}\), a linearized representation of \eqref{nonlinear equation} can be obtained in terms of the local control efforts \(u_1 = U_1 - u_{10}\)  and \(u_2 = U_2 - u_{20}\) around the bias currents $u_{10}$ and $u_{20}$, and the local displacement \(y_1^* = y_1 - y_{10}\) and \(y_2^* = y_2 - y_{20}\) about the corresponding magnetic equilibrium \(y_{10}\) and \(y_{20}\). 
Consider the state vector \([x_1, x_2, x_3, x_4]^\mathrm{T} = [y_1^*, \dot{y}_1^*, y_2^*, \dot{y}_2^*]^\mathrm{T}\), and the input vector \([u_1, u_2]^\mathrm{T}\).  
The linearized system dynamics around the nominal equilibrium then takes the form
\begin{align*}
    \begin{bmatrix}
        \dot{x}_1 \\
        \dot{x}_2 \\
        \dot{x}_3 \\
        \dot{x}_4
    \end{bmatrix} 
    &\!=\! 
    \begin{bmatrix}
        0 & 1 & 0 & 0 \\
        a_{21} & \frac{-c_1}{M} & a_{23} & 0 \\
        0 & 0 & 0 & 1 \\
        a_{41} & 0 & a_{43} & \frac{-c_2}{M}
    \end{bmatrix}
    \!
    \begin{bmatrix}
        x_1 \\
        x_2 \\
        x_3\\
        x_4
    \end{bmatrix}
    \!+\!
    \begin{bmatrix}
        0 & 0 \\
        \frac{k_{u1}}{M} & 0 \\
        0 & 0 \\
        0 & \frac{k_{u2}}{M}
    \end{bmatrix}
    \begin{bmatrix}
        u_1 \\
        u_2
    \end{bmatrix},\\ 
    \boldsymbol{y} &= 
    \begin{bmatrix}
        1 & 0 & 0 & 0 \\
        0 & 0 & 1 & 0
    \end{bmatrix}
    \!
    \begin{bmatrix}
        x_1,
        x_2,
        x_3,
        x_4
    \end{bmatrix}\T\!,
\end{align*}
where \(a_{21} = (k_1 + k_{12})/M\), \(a_{23} = -k_{12}/M\), \(a_{41} = -k_{12}/M\), 
    \(a_{43} = (k_2 + k_{12})/M\), $y_{12_0}$ is the equilibrium distance between Disks 1 and 2, and    
\begin{align*}
    \begin{aligned}
        k_1 &= \frac{4u_{10}}{a(y_{10} + b)^5}, \quad k_{u1} = \frac{1}{a(y_{10} + b)^4}, \quad
        k_2 = \frac{4u_{20}}{a(y_{20} + b)^5},\\
        k_{u2} &= \frac{1}{a(-y_{20} + b)^4}, \quad k_{12} = \frac{4c}{(y_{12_0} + d)^5}.
    \end{aligned}
\end{align*}
The magnetic equilibrium and bias currents must satisfy the equations
\begin{equation}\label{bias_input}
\begin{aligned}
 u_{10} &= a (y_{10} + b)^4\left(\frac{c}{(y_{12_0} + d)^4} + Mg\right),\\
 u_{20} &= a(y_{20} + b)^4\left(\frac{c}{(y_{12_0} + d)^4} + Mg\right). 
\end{aligned}
\end{equation}
For the desired magnetic equilibrium of \(y_{10}=0.01\,\text{m}\) and \(y_{20}=-0.02\,\text{m}\), the corresponding bias currents are calculated according to system parameters in Table \ref{tab:sys_params}. {\color{black}The resulting state-space matrices for the linearized AML system are 
\begin{equation*}
\setlength{\arraycolsep}{2.5pt}
\boldsymbol{A}\!=\! \begin{bmatrix}
0 & 1 & 0 & 0 \\
567.8 & -7.6 & 0 & 0 \\
0 & 0 & 0 & 1 \\
0 & 0 & 1003.7 & -7.6
\end{bmatrix}\!, ~
\boldsymbol{B}\!=\!\begin{bmatrix}
0 & 0 \\
8.6077 & 0 \\
0 & 0 \\
0 & 83.9636
\end{bmatrix}\!.
\end{equation*}}

As noted in Section~\ref{sec:intro}, the linearized state-space model of the AML system is highly sensitive to the equilibrium states set by the positions $y_{10}$ and $y_{20}$, and the bias currents $u_{10}$ and $u_{20}$. Because of variations in the magnetic force constants and error in the simplified relationship in \eqref{bias_input}, it is expected for the actual AML dynamics to be highly uncertain compared to the above nominal model.

\subsection{Simulation Result} \label{sec:simulation}
{\color{black}The data-driven DFC from Algorithm~\ref{algorithm1} is first evaluated on the AML system through numerical simulations. The simulation environment is implemented in MATLAB/Simulink with a fixed sampling interval of $T_s = 1$\,ms. For the collection of the training data, an exploration noise signal is introduced to the input with frequencies in the range of $[-100, 100]$ rad/s and amplitude of $0.1$. The simulations are conducted in MATLAB 2024 on a MacBook Pro M4 with an M4 processor and 24GB of memory. 

The weighting matrices of the cost function are selected as \(\boldsymbol{Q} = \boldsymbol{I}_4\) and \(\boldsymbol{R} = \text{diag}(1, 2)\), where \(\boldsymbol{I}_4\) is a \(4 \times 4\) identity matrix.
For Algorithm~\ref{algorithm1}, the initial stabilizing gain \(\boldsymbol{K}_1\) is obtained through pole placement based on the nominal model in Section~\ref{sec:model},
\[
\boldsymbol{K}_1 = \begin{bmatrix}
-9.7596 & -0.6122 & -2.8462 & -0.0197 \\
 0.5168 &  0.0038 & -1.6957 & -0.1015
\end{bmatrix}.
\]
The termination thresholds of the algorithm are set at $\bar{\zeta}=10^{-8}$
and $\bar{\eta}=10^{-6}$, and the sample duration $NT$ is 2\,s.}

\begin{figure}
    \centering
    \includegraphics[width=\linewidth]{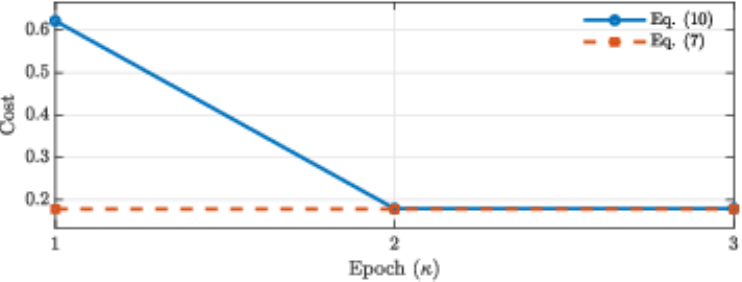}
    \includegraphics[width=\linewidth]{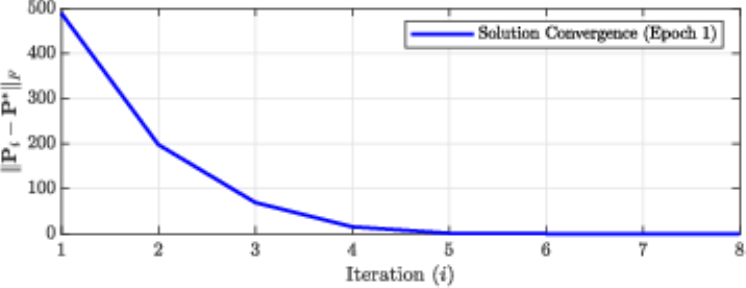}
    \includegraphics[width=\linewidth]{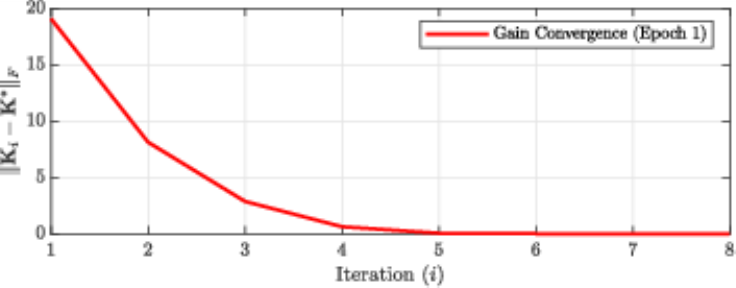}
\caption{\textcolor{black}{Top: The evolution of cost function values computed via \eqref{cost_iteration} and \eqref{finite value DFC cost functiont} over three epochs. Middle: The Frobenius norm of the error between the iterative $\boldsymbol{P}_i$ and the model-based solution \eqref{DFC ARE} over the PI iterations of the first epoch. Bottom: The convergence of the control gain error norm $||\boldsymbol{K}_i - \boldsymbol{K}_\text{ARE}||_F$ over the PI iterations of the first epoch.}}
    \label{Cost_diff_sim}
\end{figure}

\textcolor{black}{
Figure \ref{Cost_diff_sim} presents a convergence analysis of Algorithm~\ref{algorithm1}. The results indicate convergence of both the control gain and the expected cost, as determined by equations \eqref{DFC optimal gain} and \eqref{DFC ARE}. The top plot illustrates the cost computed by \eqref{cost_iteration} from the training data and the optimal cost \eqref{finite value DFC cost functiont} calculated analytically. At the first epoch, the cost associated with the initial control gain $\boldsymbol{K}_1$ is $0.6218$, reflecting suboptimal performance and above the expected minimum of $0.1798$. The middle and bottom graphs of Fig.~\ref{Cost_diff_sim} plot the Forbenius norm $\|\cdot\|_F$ of the errors $\boldsymbol{P}_i-\boldsymbol{P}^*$ and $\boldsymbol{K}_i-\boldsymbol{K}^*$, and show that the solutions converge to their optimal values within eight iterations. The control gain converges to the optimal $\boldsymbol{K}_\text{ARE}$, calculated from the nominal linearized model and \eqref{DFC ARE}-\eqref{DFC optimal gain},
\[
\boldsymbol{K}_{\text{ARE}} = \begin{bmatrix}
-13.1301 & -1.1229 & 0.0004 & 0.0000 \\
-0.0001 & -0.0000 & -4.2980 & -0.7191
\end{bmatrix}\!.
\]
The figure shows that the optimal cost is reached at the end of the first epoch, and no further improvements are observed in subsequent epochs. This is expected for the simulation tests, since there is no unknown external disturbance influencing the training process, unlike what is expected for the experimental tests. The final gain at the end of the third epoch is $\boldsymbol{\mathcal{K}}_3 = \boldsymbol{K}_\text{ARE}$.}

\begin{table}[ht]
\centering
\caption{Computational Efficiency Statistics}
\label{tab:comp_efficiency_combined}
\setlength{\tabcolsep}{8pt} 
\renewcommand{\arraystretch}{1.2} 
\begin{tabular}{ccccc}
\hline
 & \textbf{Total number} & \textbf{Max} & \textbf{Min} & \textbf{Avg} \\
 & \textbf{$\kappa$ or $i$} & \textbf{Time} & \textbf{Time} & \textbf{Time} \\ \hline
\textbf{Epochs}  & 3 & 0.280\,s & 0.244\,s & 0.261\,s \\ \hline
\textbf{Iterations}  & 7 & 4.7\,ms & 2.7\,ms & 3.1\,ms \\ \hline
\end{tabular}
\end{table}

\textcolor{black}{Table \ref{tab:comp_efficiency_combined} presents an evaluation of the computational efficiency of the proposed algorithm for training results presented in Fig.~\ref{Cost_diff_sim}, including the data-collection phase. The results indicate that computational time in the first epoch is higher, at approximately 0.28 seconds, while subsequent epochs require about 0.26 and 0.24 seconds each. Over the first epoch, the initial iteration also takes longer at approximately $4.7\times 10^{-3}$ seconds, but afterwards it decreases to below $3\times 10^{-3}$ seconds.}

\subsection{Experimental Result}
\textcolor{black}{
This section evaluates the proposed model-free DFC Algorithm~\ref{algorithm1} implemented and experimentally tested on the AML system shown in Figure~\ref{Figure_for Magnetic}. The experimental setup consists of an ECP MagLev Model 730 \cite{thomas1999manual} with the control architecture implemented in Simulink Real-Time. The equilibrium state and bias currents were set according to \eqref{bias_input} to $y_{10} = 0.01$\,m, $y_{20}=-0.02$\,m, $u_{10} = 1.1396$\,A and $u_{20} = 0.1168$\,A, respectively. Finally, to mitigate high frequency noise amplification in numerical differentiation, a second-order low-pass filter with a cutoff frequency of $f_c = 2$ Hz is integrated into the control loop.}

\begin{figure}
        \centering
        \includegraphics[height=4.5cm]{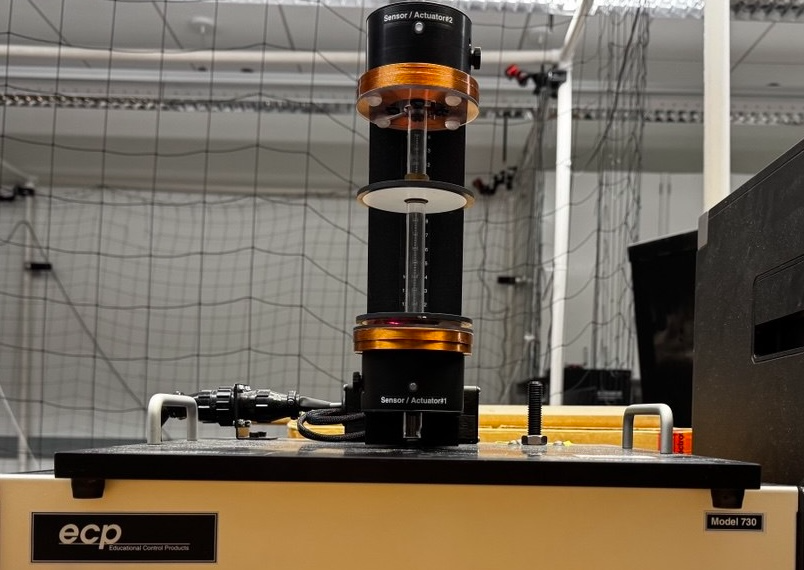}
        \caption{The experimental setup for the two-disk magnetic levitation (MagLev Model 730).}
        \label{Figure_for Magnetic}
\end{figure}

The state response of the AML system from a prescribed initial condition provides the experimental data for the training and evaluation of the model-free DFC in Algorithm~\ref{algorithm1}. Each epoch iteration includes two experiments: one for the training phase and the other for the testing phase. During the training phase, the closed-loop system operating under the initial control gain, generates the necessary training data for the PI process. An excitation signal is introduced at this stage to enrich the frequency content of the training data. Upon completing the PI training of the control law, the epoch transitions to the testing phase, where the trained control law is evaluated and cost function values are recorded.

\textcolor{black}{
The time length of the experimental training data set is set to $4$\,s, compared to the $2$\,s length used in the numerical study of Section~\ref{sec:simulation}. This difference aims to reduce the impact of measurement noise and disturbances observed during the experiments in the training process. All other settings remain unchanged, with weighting matrices \(\boldsymbol{Q} = \boldsymbol{I}_4\) and \(\boldsymbol{R} = \text{diag}(1, 2)\), and sampling time of \(T_s = 1\) ms. The amplitude of the exploratory noise signal is reduced to $0.05$ (from $0.1$ in the simulations) to prevent actuator saturation and ensure system safety, and the epoch tolerance is set to $\bar{\zeta}=0.005$. For the experimental tests, the first epoch is initialized with a stabilizing gain $\boldsymbol{K}_1=\boldsymbol{K}_\text{ARE}$ introduced in Section~\ref{sec:simulation} based on the nominal linearized model.}

\begin{figure}
    \centering
\includegraphics[width=\linewidth]{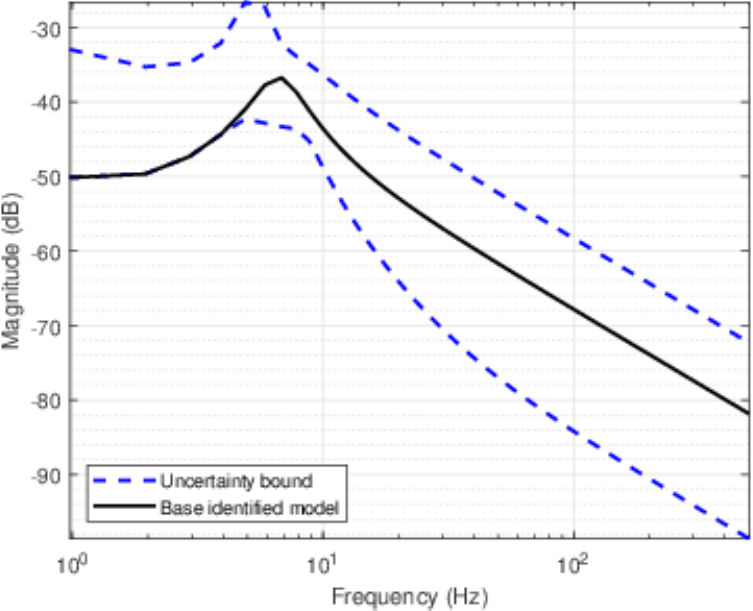}
    \caption{{\color{black}Frequency response of a candidate identified model vs uncertainty range of measured data}}
    \label{fig:magnitude_response}
\end{figure}

\textcolor{black}{In addition to the control law from the model-free PI, the DMDc-PEM method in Sect.~\ref{sec:SID} is implemented to obtain two independently identified models, from which two optimal DFC laws were designed as in \eqref{DFC ARE}-\eqref{DFC optimal gain}. A DMDc truncation order $q=4$ is selected to match the number of states and capture $99\%$ of the cumulative energy $E_\text{min}$, and a carefully selected perturbation noise is used to ensure that the system's modes are adequately excited and the identified model accurately capture the system's dynamics \cite{ghoreyshi2024design}. Figure \ref{fig:magnitude_response} illustrates the magnitude response over frequency of one identified system model from input $u_1$ to output $y_1$. The figure also shows the upper and lower uncertainty bounds determined from the measured frequency response of the experimental setup and with various excitation inputs. The high uncertainty level, particularly at low frequencies, is attributed to the inherent nonlinearity of electromagnetic systems and to errors introduced to the nominal bias current and the magnetic equilibrium. Despite this low-frequency deviation, the identified model demonstrates an acceptable match of the resonance mode and roll-off predicted by the nominal mathematical model. The presence of the uncertainty in the identified model strengthens the argument for the model-free multi-epoch PI technique, as the iterative optimization process allows for further improvement of the trained solutions with convergence guarantees, rather than being limited to a single identification process.}

\begin{figure}
    \centering
    \subfigure{\includegraphics[width=\linewidth]{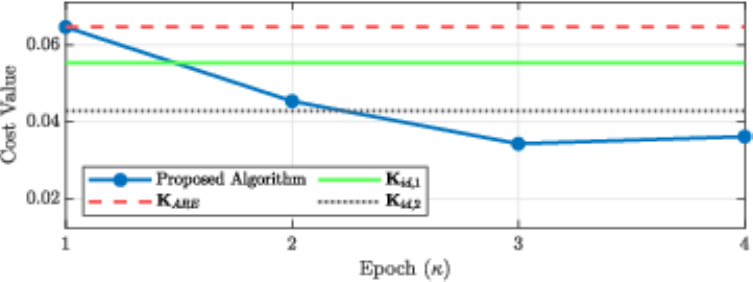}}
    \subfigure{\includegraphics[width=\linewidth]{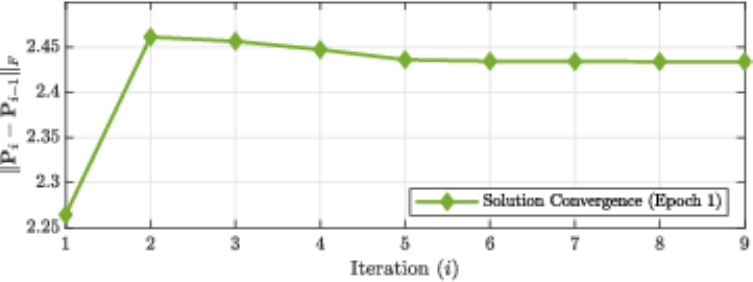}}
    \subfigure{\includegraphics[width=\linewidth]{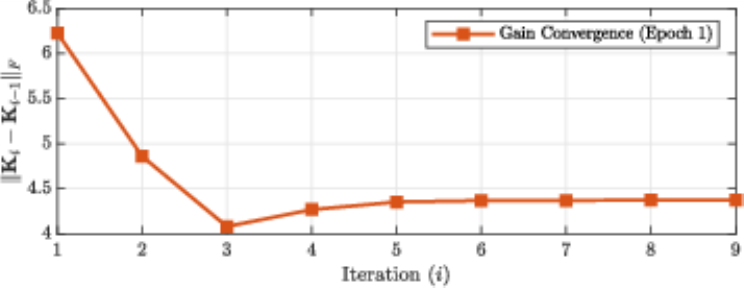}}
    
\caption{\textcolor{black}{Convergence of the Algorithm~\ref{algorithm1} observed for the experimental test. The top graph shows the cost function value over four epochs. A dashed line at $0.0647$ is the nominal cost achieved by $\boldsymbol{K}_\text{ARE}$, which serves as the reference point. The graph also shows the cost values for the solutions of two identified models: $\boldsymbol{K}_{id,1}$ (solid) and $\boldsymbol{K}_{id,2}$ (dotted). The middle and bottom graphs illustrates the convergence of the PI iteration during the first epoch ($\kappa=1$), showing the Frobenius norm of iterative updates for the value matrix ($\|\boldsymbol{P}_i - \boldsymbol{P}_{i-1}\|_F$) and the control gain ($\|\boldsymbol{K}_i - \boldsymbol{K}_{i-1}\|_F$), respectively.}}
\label{Cost_diff_imp} 
\end{figure}

\textcolor{black}{
Figure \ref{Cost_diff_imp} presents the experimental convergence results over four epochs of Algorithm~\ref{algorithm1}. The top graph displays the cost at each epoch, as defined in \eqref{cost_iteration}. The dashed line represents the optimal cost ($0.0647$) predicted by the nominal model and \eqref{DFC ARE}. The optimal DFC laws for the two DMDc-PEM identified system models are given as,
\begin{align}\boldsymbol{K}_\text{id,1} & = \begin{bmatrix} 7.6259 & 0.6401 & 1.1525 & -0.7245 \\ -20.7316 & -1.3916 & -2.8322 & 1.6845 \end{bmatrix},\\
\boldsymbol{K}_\text{id,2} & = \begin{bmatrix} -7.4415 & -0.4775 & -1.7809 & -0.2284 \\ 0.2319  & -0.1073 & -0.5889 & 0.2072 \end{bmatrix},
\end{align}
and their corresponding cost values are also included in the figure. The initial controller $\boldsymbol{\mathcal{K}}_1$ of the first epoch equals the nominal model-based optimal control, and it yields the highest cost among all controllers including those for the identified system models. After the conclusion of first epoch, the cost decreases to $0.0454$, demonstrating that the updated controller outperforms the initial nominal control solution. This controller is also an improvement from $\boldsymbol{K}_\text{id,2}$, which displays a cost of $0.0554$. In Epochs 3 and 4, the cost remains at $0.0343$ and $0.0362$, respectively, indicating robust convergence of the solution. The final cost difference between the last two epochs is $0.0019$, and both solutions outperform $\boldsymbol{K}_\text{id,2}$ with a measured cost of $0.0428$. These findings confirm the hypothesis of using the iteration over epochs to find further improvements by updating the initial control and the training data. In comparison, the DMDc-PEM identification method yields two independent DFC laws with different cost levels and without a systematic procedure to improve the solution. The final control gain of Algorithm~\ref{algorithm1} is $$\boldsymbol{\mathcal{K}}_{4} = \begin{bmatrix} 3.5055 & -1.1141 & -0.1818 & 0.1072 \\ 0.0893 & -0.2294 & 0.1449 & -0.1830 \end{bmatrix}.$$} 

Noise and training bias during a PI iterations of Algorithm~\ref{algorithm1} can significantly affect the performance of the resulting DFC law. While the conditions of persistent excitation partially address this issue, it is often difficult to satisfy and/or verify them in practical applications. By training across multiple epochs, the PI procedure in Algorithm~\ref{algorithm1} can better explore the data space and mitigate the impact of biased samples, leading to improved convergence and more reliable control performance. 

{\color{black} The convergence of the PI solution in Algorithm~\ref{algorithm1} can be further examined in Fig.~\ref{Cost_diff_imp}, which illustrates the Frobenius norm of the error $\|\boldsymbol{P}_i - \boldsymbol{P}_{i-1}\|_F$ and $\|\boldsymbol{K}_i - \boldsymbol{K}_{i-1}\|_F$ over two consecutive PI iterations of Epoch 1. These results validate the theoretical guarantees of convergence for $\boldsymbol{P}_i$ $\boldsymbol{K}_i$ of the PI algorithm.}

\begin{figure}
    \centering
    \subfigure{\includegraphics[width=\linewidth]{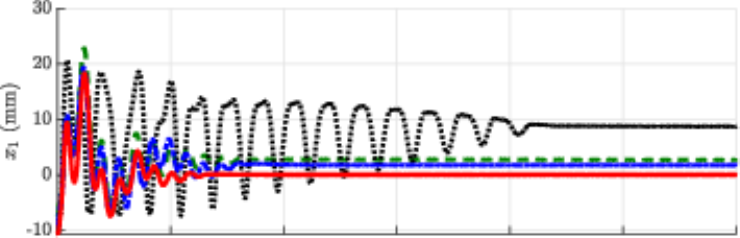}}
    \subfigure{\includegraphics[width=\linewidth]{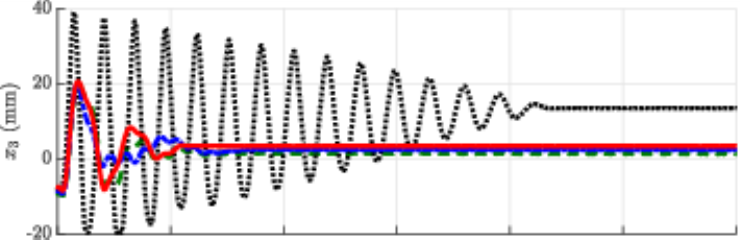}}
    \subfigure{\includegraphics[width=\linewidth]{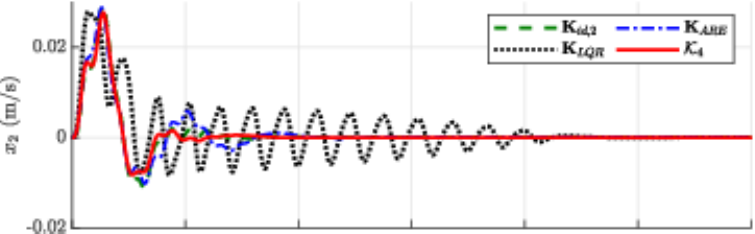}}
    \subfigure{\includegraphics[width=\linewidth]{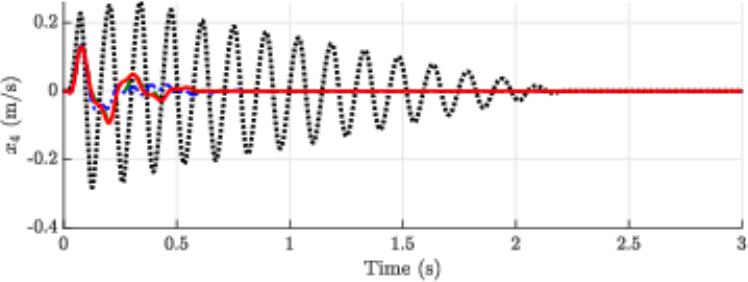}}
\caption{\textcolor{black}{State response comparison of the MagLev system. Trajectories are shown for the proposed learned gain \(\mathbf{\mathcal{K}}_4\), the model-based gain \(\mathbf{K}_\text{ARE}\), the LQR gain \(\mathbf{K}_{LQR}\), and the system identification gain \(\mathbf{K}_{id,2}\).}}
    \label{position_signal_last}
\end{figure}

\textcolor{black}{Figure \ref{position_signal_last} shows the state response of the closed-loop AML system from an fixed initial condition and approaching the equilibrium state. The closed-loop system is tested under three different DFC gains: $\boldsymbol{\mathcal{K}}_4$, $\boldsymbol{K}_\text{ARE}$, and $\boldsymbol{K}_\text{id,2}$. For completeness, the figure also shows the response under a standard state-feedback LQR control law $\boldsymbol{K}_\text{LQR}$. The responses of ${x}_1$ and ${x}_2$ show that the AML system approach the equilibrium position at zero with the DFC gains, while the state feedback control $\boldsymbol{K}_\text{LQR}$ settle the states away the equilibrium states. This highlights the function of the DFC control law in compensating for errors in equilibrium  calculations of $y_{10}$, $y_{20}$, $u_{10}$ and $u_{20}$, and the resulting bias in the feedback measurement. While the DFC can find the actual equilibrium state of the system, the standard state-feedback settle the system away from the equilibrium when the feedback signal is biased. This has important implications in AML systems as to be observed in Fig.~\ref{input_signal_imp_end}.}

\begin{figure}
    \centering
    \subfigure{\includegraphics[scale=.7]{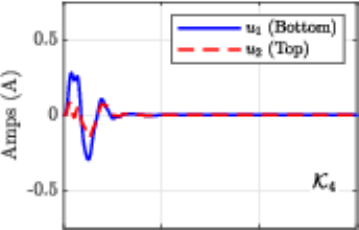}}
    \subfigure{\includegraphics[scale=.7]{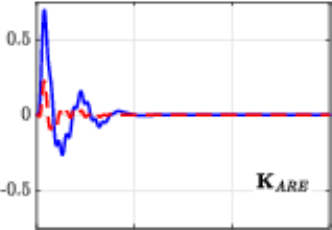}}
    \subfigure{\includegraphics[scale=.7]{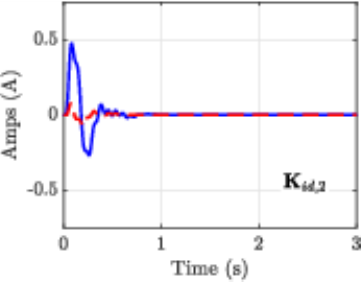}}
    \subfigure{\includegraphics[scale=.7]{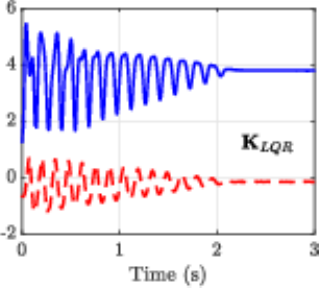}}
\caption{\textcolor{black}{Control signals corresponding to the local inputs applied to the MagLev Model 730 for the proposed model-free learned gain \(\boldsymbol{\mathcal{K}}_4\), the model-based gain \(\boldsymbol{K}_\text{ARE}\), the system identification gain \(\boldsymbol{K}_{id,2}\), and the standard LQR baseline.}}
    \label{input_signal_imp_end}
\end{figure}

\textcolor{black}{Figure \ref{input_signal_imp_end} illustrates the inputs $u_1$ and $u_2$ of the lower actuator coil and the upper coil, respectively, for the state response test presented in Fig.~\ref{position_signal_last}. The figure demonstrates that the state-feedback $\boldsymbol{K}_\text{LQR}$ maintains $u_1$ near 4\,A at the steady state, approaching the saturation limit of the AML power supply. In practical applications, this represents a significant reduction in the control authority of the closed-loop system to compensate for external disturbances. On the other hand, DFC gains display control signals that approach zero, as expected at the equilibrium state. Among the DFC laws, the trained gain $\boldsymbol{\mathcal{K}}_4$ displays the smallest input energy, with the maximum amplitude of the input limited to approximately $\pm 0.2$\,A.}

\section{Conclusion} \label{sec:Conclusion} 
{\color{black}A comprehensive comparison study of data-driven optimal DFC methods was presented for an AML system. The DFC approach was selected for this study to address the uncertainty in the magnetic equilibrium of the system and the resulting bias in the system output, whereas the data-driven approach aimed to compensate for dynamics uncertainty in the nominal model. An enhanced model-free PI framework was introduced as part of this study, which incorporates an epoch iteration loop to mitigate the influence of measurement noise and enhance the robustness of the PI process. This approach was compared to an alternative indirect method, in which a DMDc-PEM model identification technique was introduced to determine the system dynamics and derive the DFC law. The two data-driven DFC solutions was compared to benchmark solutions derived from a nominal model through numerical simulation and experimental tests. 

Simulations and experimental results demonstrated that both data-driven DFC solutions outperform the benchmarch DFC derived from the nominal model in the presence of model uncertainty, resulting in a lower cost value during the evaluation tests. Among the data-driven solutions, the solution of the enhanced PI algorithm yielded a lower cost than the indirect alternative, benefited by the theoretical guarantees of improvement over iterations shown for the PI method. The study also demonstrated the advantages of DFC over conventional state-feedback control in applications with uncertainty in the equilibrium states and feedback measurement bias.

Future work will aim to improve the scalability of the data-driven solutions to AML systems with higher order dynamics. Although it was demonstrated that the computational complexity of the PI algorithm increases linearly with total iteration and epoch numbers, the increase in complexity is cubic relative to the order of the controlled system, and solutions to address this increase will be investigated. Data-driven controller design with guaranteed robustness margins is another area of future work to enable applications in complex AML systems.
}

\bibliography{references}
\bibliographystyle{IEEEtranS}

\end{document}